\newcommand{\A}{\mathcal A}
\newtheorem{definition}{Definition}
\newtheorem{lemma}{Lemma}
\title{Optimal Space Lower Bound for Deterministic Self-Stabilizing Leader Election Algorithms\thanks{Support by ANR ESTATE (ANR-16-CE25-0009-03) and ANR GrR (ANR-18-CE40-0032)}}
\author{L{\'e}lia Blin\affiliationmark{1}
\and 
Laurent Feuilloley\affiliationmark{2}
\and
Gabriel Le Bouder\affiliationmark{3}
}%
\affiliation{Sorbonne Université, LIP6 and Université d’Evry-Val-d’Essonne\\
Univ Lyon, CNRS, INSA Lyon, UCBL, Centrale Lyon, Univ Lyon 2, LIRIS\\
Sorbonne Université and INRIA}
\keywords{Space lower bound, memory tight bound, self-stabilization, leader election, anonymous, identifiers, state model, ring topology}
\begin{document}
\maketitle

%\vspace{1cm}

%!TEX root = main-lower-bound.tex

\begin{abstract}
Given a boolean predicate~$\Pi$ on labeled networks (e.g., proper coloring,
leader election, etc.), a self-stabilizing algorithm for~$\Pi$ is a distributed
algorithm that can start from any initial configuration of the network (i.e.,
every node has an arbitrary value assigned to each of its variables), and
eventually converge to a configuration satisfying~$\Pi$. It is  known that
leader election does not have a deterministic self-stabilizing algorithm using a
constant-size register at each node, i.e., for some networks, some of their
nodes must have registers whose sizes grow with the size~$n$ of the networks. On
the other hand,  it is also known that leader election can be solved by a
deterministic self-stabilizing algorithm using registers of $O(\log \log n)$
bits per node in any $n$-node bounded-degree network. We show that this latter
space complexity is optimal. Specifically, we prove that every deterministic
self-stabilizing algorithm solving leader election must use $\Omega(\log \log
n)$-bit per node registers in some $n$-node networks. In addition, we show that
our lower bounds go beyond leader election, and apply to all problems that
cannot be solved by anonymous algorithms.

\end{abstract}

%!TEX root = main-lower-bound.tex

\section{Introduction}

\subsection{Context}

Self-stabilization is a paradigm suited to asynchronous distributed systems
prone to transient failures. The occurrence of such a failure (e.g., memory
corruption) may move the system to an arbitrary configuration, and an algorithm
is self-stabilizing if it guarantees that whenever the system is in a
configuration that is illegal with respect to some given boolean
predicate~$\Pi$, the system returns to a legal configuration in finite time (and
remains in legal configuration as long as no other failures occur). In this
paper, we study self-stabilization in networks. The network is modeled as a
graph $G=(V,E)$, and we consider predicates defined on labeled graphs. For
instance, in proper $k$-coloring, a configuration is legal if every node is
labeled by a color $\{1,\dots,k\}$ that is different from the colors of all its
neighbors. Given a boolean predicate~$\Pi$, a self-stabilizing algorithm
for~$\Pi$ is a distributed algorithm enabling every node, given any input state,
to construct a label such that the resulting labeled graph satisfies~$\Pi$.

During the execution of a self-stabilizing algorithm, the nodes exchange
information along the links of the network, and this information is stored
locally at every node. Specifically, processes in a distributed system have two
types of memory: the \emph{persistent} memory, and the \emph{mutable} memory.
The persistent memory is used to store the identity of the process (e.g., its IP
address), its port numbers, and the code of the algorithm executed on the
process. Importantly, this section of the memory is not write enabled during the
execution of the algorithm. As a consequence it is less likely to be
corruptible, and most work in self-stabilization assumes that this part of the
memory is not subject to failures. The mutable memory is used to store the
variables used by the algorithm, and is subject to failures, that is, to the
corruption of these variables. The space complexity of a self-stabilizing
algorithm is the total size of all the variables used by the algorithm,
including those used to encode the output label of the node. For instance, the
space complexity of the algorithm for $k$-coloring is at least $\Omega(\log k)$
bits per node, for encoding the colors in $\{1,\dots,k\}$. The question
addressed in this paper is: under which circumstances is it possible to reach a
space complexity as low as the size of the labels? And if not, what is the
smallest space complexity that can be achieved?

Preserving small space complexity is indeed very much desirable, for several
reasons. First, it is expected that self-stabilizing algorithms offer some form
of universality, in the sense that they are executable on several types of
networks. Networks of sensors as used in IoT, as well as networks of robots as
used in swarm robotics, have the property to involve nodes with limited memory
capacity, and distributed algorithms of large space complexity may not be
executable on these types of networks. Second, a small space complexity is the
guarantee to consume a small bandwidth when nodes exchange information, thus
reducing the overhead due to link congestion~\cite{AdamekNT12}. In fact, a
self-stabilizing algorithm is never terminating, in the sense that it keeps
running in the background in case a failure occurs, for helping the system to
return to a legal configuration. Therefore, nodes may be perpetually exchanging
information, even after stabilization, and even when no faults occur. Limiting
the amount exchanged information, and thus, specially, the size of the
variables, is therefore of the utmost importance for optimizing time, and even
energy. Last but not least, increasing robustness against variable corruption
can be achieved by data replication~\cite{HermanP00}. This is, however, doable
only if the variables are reasonably small. Said otherwise, for a given memory
capacity, the smaller the space complexity the larger the robustness thanks to
data replication.

\subsection{Contributions of the paper}

In this paper, we focus on one of the arguably most important problems in the
context of distributed computing, namely \emph{leader election}. The objective
is to maintain a unique leader in the network, and to enable the network to
return to a configuration with a unique leader in case there are either zero or
more than one leader. Interestingly, encoding legal states consumes one single
bit at each node. Indeed, in a leader election, every node has a label with
value~0 or~1, and these labels form a legal configuration if there is one and
only one node with label~1. As a consequence, up to an additive constant, the
space complexity is exactly the space used to encode the variables of the
algorithm (which is not the case for other problems where the output itself uses
some non-trivial space to be encoded).

We establish the lower bound of $\Omega(\log \log n)$ bits per node for the
space complexity of leader election. This improves the only lower bound known so
far (see~\cite{BeauquierGJ99}), which states that leader election has
nonconstant space complexity, i.e., complexity $\omega(1)$, where $f=\omega(g)$
if $g(n)/f(n)\to 0$ when $n\to \infty$. More importantly, our bound matches the
best known upper bound on the space complexity of leader election, which is
$O(\log\log n)$ bits per node in bounded degree networks~\cite{BlinT18-b}, and
specially invalidates the folklore conjecture stating that leader election
is solvable using only $O(\log^*n)$ bits of mutable memory per node.

We obtain our lower bound by establishing an interesting connection between
self-stabilizing algorithms with small space complexity, and self-stabilizing
algorithms performing in homonym networks, that is, in networks in which
identifiers can be shared by several nodes. Homonym networks are an
intermediate situation between networks with unique identifiers, and anonymous
networks, which are networks in which nodes have no identifiers.
% We obtain our lower bound by establishing an interesting connection between
% self-stabilizing algorithms with small space complexity, and self-stabilizing
% algorithms performing in anonymous networks, that is, in networks in which
% nodes have no identifiers.
(Recall that space complexity counts solely the size of the mutable memory, and
does not include the immutable persistent memory where the identifiers are
stored). More specifically, the technical ingredient used for establishing our
results are the following. It is known that many self-stabilization problems,
including vertex coloring, leader election, spanning tree construction, etc.,
require that the nodes are provided with unique identifiers, for breaking
symmetry. Indeed, no algorithm can solve these problems in homonym networks
(under a standard distributed scheduler).
We show that, for any self-stabilizing algorithm in a network with unique node
identifiers, if the space complexity of the algorithm is too small, then the
algorithm does not have more power than a self-stabilizing algorithm running in
a homonym network. More precisely, let $\A$ be an algorithm in a
network with unique node identifiers, and let us assume that $\A$ has space
complexity $o(\log \log n)$ bits per node. Such a small space complexity does
not prevent $\A$ from exchanging identifiers between nodes, but they must be
transferred as a series of smaller pieces of information that are pipelined
along a link, each of size $o(\log \log n)$ bits. On the other hand, a node
cannot store the identifier of even just one of its neighbors. We show that,
with space complexity $o(\log \log n)$ bits per node, there exist graphs and
assignments of unique identifiers to the nodes of these graphs such that, in these
graphs and for these identifier-assignments, $\A$ has the same behavior as an
algorithm executed in these graphs but where identifiers can be
shared by several nodes (i.e. in the $k$-homonym version of these graphs).
% We show that, with space complexity $o(\log \log n)$ bits per node, there exist graphs and assignments of identifiers to the nodes of these graphs such that, in these graphs and for these identifier-assignments, $\A$ has the same behavior as an algorithm executed in these graphs but in the absence of identifiers (i.e., in the anonymous version of these graphs).
We then show that no algorithms can solve leader election in these graphs
with duplicated identifiers, from which it follows that $\A$ cannot
solve leader election in these graphs with unique identifiers as long as its
space complexity is $o(\log \log n)$ bits per node.

Under a slightly different assumption on the distribution of identifiers, we
establish another equivalence with anonymous networks, instead of homonym
networks. Namely, with space complexity $o(\log \log n)$ bits per node, there
exist graphs and assignments of identifiers to the nodes of these graphs such
that $\A$ has the same behavior in these identified graphs and in the anonymous
version of these graphs (i.e in the absence of identifiers).

\subsection{Related work}

Space complexity of self-stabilizing algorithms has been extensively studied for
\emph{silent} algorithms, that is, algorithms that guarantee that the content of
the variables of every node does not change once the algorithm has reached a
legal configuration. For silent algorithms, Dolev and al. \cite{DolevGS99},
proved that finding the centers of a graph, electing a leader, and constructing
a spanning tree requires registers of $\Omega(\log n)$ bits per node. Silent
algorithms have later been related to a concept known as \emph{proof-labeling
scheme} (PLS)~\cite{KormanKP10}. Any lower bound on the size of the proofs in a
PLS for a predicate~$\Pi$ on labeled graph implies a lower bound on the size of
the registers for silent self-stabilizing algorithms solving~$\Pi$. A typical
example is the $\Omega(\log^2n)$-bit lower bound on the size of any PLS for
minimum weight spanning trees (MST)~\cite{KormanK07} , which implies the same
bound for constructing an MST in a silent self-stabilizing
manner~\cite{BlinF15}. Thanks to the tight connection between silent
self-stabilizing algorithms and proof-labeling schemes, the space complexity of
a vast collection of problems is known, for silent algorithms. (See
\cite{FeuilloleyF16} for more information on proof-labeling schemes.)

On the other hand, to our knowledge, the only lower bound on the space
complexity for general self-stabilizing algorithms (without the requirement of
being silent) that corresponds to our setting has been established by Beauquier,
Gradinariu and Johnen~\cite{BeauquierGJ99} who proved that registers of constant
size are not sufficient for leader election algorithms. Interestingly, the same
paper also contains several other space complexity lower bounds for models
different from ours -- e.g., anonymous networks, or harsher forms of asynchrony.
Although there are very few lower bounds for the model we consider, there exist
  several impossibility results for specific topologies.
%   There are nevertheless impossibility results for some problems, and especially for leader election, in slightly different models.
In the case of anonymous networks, where nodes do not have a unique identifier,
Dijkstra~\cite{Dijkstra82} proved that there does not exist any self-stabilizing
algorithm for leader election, due to the general impossibility to break
symmetry in such networks. A slightly more powerful model than anonymous
networks is homonym networks, where identifiers might be shared by several
nodes. In~\cite{Delporte-GalletFT14} the authors propose a necessary and
sufficient condition on the number of distinct labels in bidirectional homonym
rings to solve leader election. Later,~\cite{AltisenDDDL20} generalizes the
previous result by establishing impossibility results for leader election in some
unidirectional homonym rings. Note that the impossibility results
of~\cite{Delporte-GalletFT14} and~\cite{AltisenDDDL20} are established for
general distributed algorithms, without any self-stabilization requirements.

The literature dealing with upper bounds is far richer. In particular,
\cite{BlinT18-a} recently presented a self-stabilizing leader election algorithm
using registers of $O(\log\log n)$ bits per node in $n$-node rings. This
algorithm was later generalized to networks with maximum degree $\Delta$, using
registers of $O(\log\log n+\log \Delta)$ bits per node~\cite{BlinT18-b}. It is
worth noticing that spanning tree construction and $(\Delta+1)$-coloring have
the same space complexity $O(\log\log n+\log \Delta)$ bits per node
~\cite{BlinT18-b}. Prior to these works, the best upper bound was a space
complexity $O(\log n)$ bits per node~\cite{BlinT18-a}, and it has then been
conjectured that, by some iteration of the technique enabling to reduce the
space complexity from $O(\log n)$ bits per node to $O(\log\log n)$ bits per
node, one could go all the way down to a space complexity of $O(\log^*n)$ bits
per node. Arguments in favor of this conjecture were that such successive
exponential improvements have been observed several times in distributed
computing. A prominent example is the time complexity of minimum spanning tree
construction in the congested clique model~\cite{LotkerPPP05, HegemanPPSS15,
GhaffariP16, Jurdzinski018}. Complexities $O(\log^*\!n)$ are not unknown in the
self-stabilizing framework~\cite{AwerbuchO94}, and it seemed at first that the
technique in~\cite{DolevGS99} could indeed be iterated (in a similar fashion as
in~\cite{BoczkowskiKN17}). Our result shows that this is not the case, and
somewhat closes the question of the space complexity of leader election.

A preliminary version of this paper appeared at OPODIS
2021~\cite{BlinFB21}. The previous version established a similar lower bound for
leader election algorithms, via the equivalence between algorithms with small
memory and algorithms for anonymous networks. The current version establishes
the same bound but with weaker assumptions on the identifier range and the
scheduler, via an equivalence between algorithms with small memory and
algorithms for homonym networks.

%!TEX root = main-lower-bound.tex

\section{Model and definitions}\label{sec:model-definitions}
In this paper, we are considering the \emph{state model} for self-stabilization~\cite{Dijkstra74}.
The asynchronous network is modeled as a simple $n$-node graph $ G = (V, E) $, where the set of the nodes $ V $ represents the processes, and the set of edges
$ E $ represents pairs of processes that can communicate directly with each other. Such pairs of processes are called \emph{neighbors}. The set of the neighbors of node $v$ is denoted $N(v)$.
Each node has local variables and a local algorithm.
The variables of a node are stored in its mutable memory, also called its \emph{register}.
In the state model, each node $v$ has read/write access to its register.
Moreover, in one atomic step, every node reads its own register and the registers of its neighbors, executes its local algorithm and updates its own register if necessary.
Note that the values of the variables of one node $v\in V$ are called the \emph{state} of $v$, and denoted by $S(v)$.

The core of this paper is establishing a link between networks with unique identifiers and weaker types of networks.
Let us consider one graph $G = (V, E)$.
We say that $G$ is an \emph{identified network} if each node $v\in V$ has a distinct identity, denoted by $ID(v)$.
For all $k\in \mathbb N^{*}$, we say that $G$ is a \emph{$k$-homonym network} if each node $v\in V$ has an identity $ID(v)$, and if for any identity $id$, there exists at most $k$ nodes in $V$ whose identity is $id$.
Remark that identified networks can be seen as $1$-homonym networks.
Finally, we say that $G$ is an \emph{anonymous network} if nodes do not have any identity, or equivalently, all nodes have the same identity.
% BESOIN DE PARLER D'ALGO ANONYMES...

% Each node $v\in V$ has a distinct identity, denoted by $ID(v)\in\{1,\dots,n^c\}$ for some constant $c > 1$.
For each adjacent edge, each node has access to a locally unique port number. No assumption is made on the consistency between port numbers on each node.
The mutable memory is the memory used to store the variables, while the immutable memory is used to store the identifier, the port numbers, and the code of the protocol.
As a consequence, the identity and the port numbers are non corruptible constants, and only the mutable memory is considered when computing the memory complexity because it corresponds to the memory readable by the neighbors of the nodes, and thus correspond to the information transmitted during the computation.
More precisely, an algorithm may refer to the identity, or to the port numbers, of the node, without the need to store them in the variables.
% If at least one rule of an algorithm refers to the identity of the node, we call this algorithm an \emph{ID-based algorithm}. Otherwise, if the rules do not refer to the identity of the node we say it is an \emph{anonymous algorithm}.

The output of the algorithm for a problem is carried through local variables of each node. The output of the problem may use all the local variables, or only a subset of them.
Indeed, the algorithm must have local variables that match the output of the problem, we call these variables the specification variables.
But the algorithm may also need some extra local variables that may be necessary to compute the specification variables.
For example, if we consider a silent BFS spanning tree construction, the specification variables are the variables dedicated to pointing out the parent in the BFS.
However, to respect the silent property, the algorithm needs in each node a variable dedicated to the identity of the root of the spanning tree and a variable dedicated to the distance from the root.
As a consequence, we define the \emph{specification} of problem $P$ as a description of the correct assignments of specification variables, for this specific problem $P$.

A \emph{configuration} is an assignment of values to all variables in the system, let us denote by $\Gamma$ the set of all the configurations.
A \emph{legal configuration} is a configuration $\gamma$ in $\Gamma$ that respects the specification of the problem, we denote by $\Gamma^*$ the set of legal configurations.
A local algorithm is a set of rules the node can apply, each rule is of the form \emph{<label>:<guard>$\rightarrow$<command>}. A \emph{guard} is a Boolean predicate that uses the local variables of the node and of its neighbors, and a \emph{command} is an assignment of variables.
A node is said to be \emph{enabled} if one of its guards is true and \emph{disabled} otherwise.

We consider an asynchronous network, the asynchrony of the system is modeled by an adversary called \emph{scheduler} or \emph{daemon}.
The scheduler chooses, at each step, which enabled nodes will execute a rule.
Several schedulers are introduced in the literature depending on their characteristics. Dubois and al. in \cite{DuboisT11} presented a complete overview of these schedulers.
Since we are interested in showing a lower bound, we aim for the least challenging scheduler.
Our lower bound is established under the \emph{central strongly fair scheduler}.
The central strongly fair scheduler activates exactly one enabled node at each step, which facilitates symmetry breaking.
A scheduler is strongly fair if, for every execution, a node which is infinitely often enabled is infinitely often activated.
The strong fairness property is captured by both weak fairness and unfairness, and the central activation is captured by locally central activation, and by distributed activation, which makes the central strongly fair scheduler the weakest scheduler in the literature.
% Our lower bound is established under the \emph{synchronous distributed scheduler}, a strongly fair distributed scheduler.
% The synchronous distributed scheduler activates, at each step, all the enabled nodes.
% The synchronous distributed scheduler is captured by the weakly fair and unfair distributed schedulers, but not by the central scheduler that activates only one at each step.

A configuration $\gamma\in \Gamma$ is a legal configuration for the leader election problem if one single node is elected. More formally:

\begin{definition}[Leader election] \label{def:leader-election}
Leader election in $G=(V,E)$ is specified by a boolean variable $\ell_v$ at each node $v\in V$. A configuration $\{(v,\ell_v):v\in V\}$ is legal if there is a node $v\in V$ such that $\ell_v=true$, and for every other node $u\in V\smallsetminus\{v\}$, $\ell_u=false$.
\end{definition}

%!TEX root = main-lower-bound.tex
\section{Formal Statement of the Results}

\subsection{Lower bounds}

Our first result is an $\Omega(\log \log n)$ lower bound for leader election on the ring.

\begin{restatable}{theorem}{thmLEcycle}
  \label{thm:LE-cycle}
  Let $c\in \mathbb R, c > 1$. Every deterministic self-stabilizing algorithm solving leader election in the state model under a central strongly fair scheduler requires registers of size at least $\Omega(\log \log n)$ bits per node in $n$-node composite rings with unique identifiers in $[1,n\times c]$.
  % Let $c > 1$. Every deterministic self-stabilizing algorithm solving leader election in the state model under a strongly fair distributed scheduler requires registers on $\Omega(\log \log n)$ bits per node in $n$-node rings with unique identifiers in $[1,n^c]$.
\end{restatable}

This bound improves the only lower bound known so far~\cite{BeauquierGJ99}, from $\omega(1)$ to $\Omega(\log \log n)$, and it is tight, as it matches the upper bound of \cite{BlinT18-a}, obtained in the same model and under a more challenging scheduler, the weakly fair distributed scheduler.
In particular, it invalidates the folklore conjecture stating that the aforementioned problems are solvable using only $O(\log^*\!n)$ memory.

\paragraph*{Optimality of the assumptions}

Our lower bound is actually optimal not only in terms of size, but also in terms of the assumptions we make on the setting.
More precisely, our theorem has three restrictions: it works for deterministic algorithms only, on rings with a non-prime number of nodes, and with identifiers in a large enough range.
Remark that we do not make any assumption on the scheduler.
Indeed, the central strongly fair scheduler is the least challenging scheduler, and represents therefore the most challenging hypothesis for an impossibility result.
% More precisely, our theorem has three restrictions: it works for deterministic algorithms only, with the distributed scheduler, and with identifiers in a large enough range.
We will now discuss why these limitations are actually necessary.

Randomization is a common tool for symmetry breaking, and our problem is one example.
Namely, \cite{ItkisL94} proved that using randomization, one can solve leader election using constant memory, which implies that our result cannot be generalized in that direction.

Theorem~\ref{thm:LE-cycle} only applies to algorithms that solve leader election on composite rings.
% The technique used in our proof can easily be adapted so that our theorem applies to algorithms that solve leader election on rings of size a multiple of $3$, a multiple of $4$, and so on.
It cannot be applied to prime rings (that is, rings whose size $n$ is a prime number).
There is no hope to extend our result in that direction, since~\cite{ItkisLS95} builds a constant memory algorithm for leader election on anonymous prime rings, under a central scheduler.

Finally, and this is probably more surprising, we need to consider identities between 1 and $n\times c$, for $c>1$.
This is not an artifact of our proof: it is actually necessary for the result to hold.
Indeed, if the identifier range is $[1,n]$, then an algorithm may use the node with identifier 1 as a designated node, and have a special code for it.
Algorithms using such designated nodes are called \emph{semi-uniform algorithms} and they can achieve space complexity below our lower bound~\cite{DattaJPV00,Johnen97}.
Even without the possibility of having a designated node, one can take advantage of smaller identifier range: there actually exists a leader election algorithm for the ring which requires constant memory if the identities are in $[1,n+c]$ for a constant $c$~\cite{BeauquierGJ99}.
% Even without the possibility of having a designated node, one can take advantage of smaller identifier range: there actually exists an algorithm in constant memory if the identities are in $[1,n+k]$ for a constant $k$~\cite{BeauquierGJ99}.

\paragraph*{A general result on the power of the identifiers}

Actually, our technique goes beyond the setting of Theorem~\ref{thm:LE-cycle}.
First, we do not need the harshest aspects of the self-stabilizing model which is that the initial configuration can be arbitrary.
If we start from an empty configuration, our technique still holds.
Second, the technique works for basically any problem that requires minimal symmetry breaking, not just leader election.
Third, as hinted above the type of scheduler is not really important. %, as long as it does not break symmetry.
We prove the more general following theorem.

\begin{restatable}{theorem}{thmGeneralHomo}
  \label{thm:general-thm-homo}
%   Let $c\in\mathbb R, c>1$, let $\delta(n) \in o(\log n)$, and let $k \in \mathbb N, k\geq 2$.
%   Let us consider $(G_i)_{i\in\mathbb N}$ a class of graphs which contain graphs of arbitrary large size, with degree at most~$\delta(n)$.
  Let $c\in\mathbb R, c>1$ and let $k \in \mathbb N, k\geq 2$.
  Let us consider $(G_i)_{i\in\mathbb N}$ a class of graphs which contains graphs of arbitrary large size, multiple of $k$, such that $\Delta(G_i) \in o(\log n)$.

  If there exists a deterministic self-stabilizing algorithm $\mathcal A$ that solves a problem $\mathcal P$ under scheduler $\mathcal D$ in the state model and uses registers of size $o(\frac {\log\log n}{\Delta})$ on all identified networks $G_i$ with unique identifiers in $[1, n\times c]$, then
 $\mathcal A$ solves $\mathcal P$ under scheduler $\mathcal D$ in the state model on all sufficiently large $k$-homonym networks $G_i$.
\end{restatable}

Furthermore, if we consider a slightly different hypothesis on the range in which are chosen the identifiers, we can prove a similar result using anonymous networks.

\begin{restatable}{theorem}{thmGeneralAno}
  \label{thm:general-thm-ano}
%   Let $c \in \mathbb R, c>1$, and let $\delta(n) \in o(\log n)$.
%   Let us consider $(G_i)_{i\in\mathbb N}$ a class of graphs which contain graphs of arbitrary large size, with degree at most~$\delta(n)$.
  Let $c \in \mathbb R, c>1$.
  Let us consider $(G_i)_{i\in\mathbb N}$ a class of graphs which contains graphs of arbitrary large size, such that $\Delta(G_i) \in o(\log n)$.

  If there exists a deterministic self-stabilizing algorithm $\mathcal A$ that solves a problem $\mathcal P$ under scheduler $\mathcal D$ in the state model and uses registers of size $o(\frac {\log\log n}{\Delta})$ on all identified networks $G_i$ with unique identifiers in $[1, n^c]$, then
 $\mathcal A$ solves $\mathcal P$ under scheduler $\mathcal D$ in the state model on all sufficiently large anonymous networks $G_i$.
\end{restatable}

What our paper is really about is the power of identifiers in a scenario where very little space or communication is used.
Our core result is that if an algorithm uses less than $\Theta(\log \log n)$ bits per node, unique identifiers are useless in the sense that, in the worst case, the algorithm is not more powerful than it is on anonymous networks.
% , in the worst-case, the performance of an algorithm using these identifiers is the same as the performance of an anonymous algorithm.
Remember that the Naor-Stockmeyer order-invariance theorem~\cite{NaorS95} states that in the LOCAL model, for local problems, constant-time algorithms that use the exact values of the identifiers are not more powerful than the order-invariant algorithm that only uses the relative ordering of the identifiers. In some sense our paper and \cite{NaorS95} have the same take-home message, in two different contexts: if you do not have enough resources, you cannot use the (full) power of the identifiers.

Theorem~\ref{thm:general-thm-ano} establishes that proving a $\Omega(\log \log n)$ lower bound for algorithms in identified networks boils down to proving that algorithms cannot solve the problem in anonymous networks.
This is useful, because indistinguishability arguments are easier to establish in anonymous networks than in identified networks.

Remark that, contrary to Theorem~\ref{thm:LE-cycle}, the statements of Theorems~\ref{thm:general-thm-homo} and~\ref{thm:general-thm-ano} do not refer to the scheduler.
In Theorem~\ref{thm:LE-cycle}, we prove one impossibility result and thus we must detail the settings for which it is obtained.
Theorems~\ref{thm:general-thm-homo} and~\ref{thm:general-thm-ano} are not directly impossibility results, but equivalence theorems, which are true regardless of the considered scheduler.

Finally, one aspect that is not explicit in the statement of Theorems~\ref{thm:general-thm-homo} and~\ref{thm:general-thm-ano} but follows from the proof, is that actually both results also hold if the nodes have inputs or outputs. In particular, our results apply to the semi-uniform setting where exactly one node has a special input.

\paragraph*{About the port number model}

Our general theorems, Theorems~\ref{thm:general-thm-homo} and~\ref{thm:general-thm-ano}, hold in the model where a node knows its port numbers but not the ones of its neighbors.
Intuitively, this implies that a node $u$ cannot specify that some piece of information is intended to the node of  port number $p$, because that node does not know it has been assigned port number $p$.

In some graphs, the port number assignment can be chosen in such a way that knowing the port number assignment of the neighbors does not help.
For example, in the cycle of Theorem~\ref{thm:LE-cycle}, we can arrange the port numbers such that every edge is assigned port number 1 by one endpoint, and port number 2 by the other.
This allows to generalize our first result to the model where a node knows both port numbers on every adjacent edge.

\paragraph*{Extension to larger range of identifiers}
As we said above, our theorems are optimal in all the assumptions we made.
The range in which are taken the identifiers is one of those assumptions: the smaller the range, the simpler it is to use characteristics of identifiers to break symmetries.
Reciprocally, if we suppose that the identifiers are taken in a larger range than what we assumed, then Theorems~\ref{thm:LE-cycle},~\ref{thm:general-thm-homo} and~\ref{thm:general-thm-ano} remain valid.

We can actually be a bit more specific.
The core of our proofs is about establishing a link between small memory, and the fact that several nodes have to behave exactly the same.
If we suppose that there are much more identifiers than what is stated in our theorems, then it is more likely to find several identifiers that will correspond to identical behaviors, and then we can hope finding a higher lower bound on memory.

This is actually true, and the following establishes how to proceed:
Suppose that the network is granted with unique identifiers in $[1, M(n)]$ with $n\times c = O(M(n))$ (\textit{resp.} $n^c = O(M(n))$).
Then we can substitute $\log\log n$ by $\log \log M(n)$ in Theorems~\ref{thm:LE-cycle} and~\ref{thm:general-thm-homo} (\textit{resp.} Theorem~\ref{thm:general-thm-ano}) and the new theorem is valid.

The proof given in Sections~\ref{sec:LE-cycle},~\ref{sec:general-homo} and~\ref{sec:general-ano} can easily be adapted to embrace the formulation above.
Yet, to avoid overloading the article with technical details, we only prove the initial version of Theorems~\ref{thm:LE-cycle},~\ref{thm:general-thm-homo} and~\ref{thm:general-thm-ano}.

\paragraph*{Validity under stronger forms of homonymy}
Theorem~\ref{thm:general-thm-homo} establishes an equivalence between algorithms with small memory, and algorithms for $k$-homonym networks, for any fixed value of $k\in \mathbb N$.
On the other hand, the technique used in the proof of Theorem~\ref{thm:LE-cycle} guarantees that for any $k\leq \sqrt n$, we can build a $k$-homonym network on which the algorithm behaves exactly the same way as in an identified network.
This is stronger than what is stated in Theorem~\ref{thm:general-thm-homo} since $k$ might depend on $n$.
Since the proofs of Theorem~\ref{thm:LE-cycle} and~\ref{thm:general-thm-homo} are basically the same, we can adapt the given proof of Theorem~\ref{thm:general-thm-homo} to make the result a bit more general.
Namely, we can replace "let $k\in \mathbb N, k\geq 2$ by "let $x\in \mathbb R, x<1$ and let $k(n) = n^x$".
This makes Theorem~\ref{thm:general-thm-homo} more general.

\paragraph*{Link with Dijkstra's impossibility result}

In a celebrated paper~\cite{Dijkstra82}, Dijkstra established, among other things, that one cannot break symmetry within anonymous composite rings.
This results holds under a central strongly fair scheduler.
Theorem~\ref{thm:LE-cycle} follows the same idea as Dijkstra's.
The core of the proof, and the statement of Theorems~\ref{thm:general-thm-homo} and~\ref{thm:general-thm-ano} is about proving that an algorithm with too little memory cannot fully use the power of identifiers.
The second part of the proof of Theorem~\ref{thm:LE-cycle} is basically a generalization of~\cite{Dijkstra82}: we prove that one cannot break symmetry in homonym composite rings.
% A central scheduler is somehow the opposite of the distributed scheduler used in Theorem~\ref{thm:LE-cycle}: it activates one node at every round instead of activating all of them.
% Thus, we can generalize Dijkstra's result.
% Indeed, as highlighted before, the core of our proofs, and the statement of Theorem~\ref{thm:general-thm-ano}, is about proving that an algorithm with too little memory cannot perform better than an anonymous algorithm, and this does not depend on the scheduler.
% Therefore, by combining Dijkstra's result and Theorem~\ref{thm:general-thm-ano}, we directly get the following corollary, that complements Theorem~\ref{thm:LE-cycle}.

Recall that assuming that the ring is composite is essential, due to the constant memory algorithm on anonymous prime rings of~\cite{ItkisLS95}.

% \begin{corollary}
%   \label{cor:le-gen}
%   Let $c > 1$. Every deterministic self-stabilizing algorithm solving leader election in the state model under a strongly fair \emph{central scheduler} requires registers of size at least $\Omega(\log \log n)$ bits per node in $n$-node composite rings with unique identifiers in $[1,n^c]$.
% \end{corollary}

% in that the leader election cannot be solved on anonymous composite rings (rings of size $n$, a non-prime number) under a central strongly fair scheduler.

%The proof of Theorem~\ref{thm:LE-cycle} lies on the fact that the scheduler can activate all the nodes at each step. Yet, the core of the proof lies on the fact that $o(\log\log n)$ bits per node of memory is not enough to use the entire power of identifiers, and thus that solving a problem with that much memory boils down to sloving the same problem, in the same model, with an anonymous alogrithm,
%this is formally stated in Theorem~\ref{thm:general-thm}.
%On the other hand, Dijkstra established in~\cite{Dijkstra82} that the leader election cannot be solved on anonymous composite rings (rings of size $n$, a non-prime number) under a central strongly fair scheduler.

%By combining Theorem~\ref{thm:LE-cycle} and the result of~\cite{Dijkstra82}, we obtain an extention of Theorem~\ref{thm:LE-cycle} to the central strongly fair scheduler.

\subsection{Intuition of the proofs}

\paragraph*{Challenge of lower bounds for non-silent algorithms}

Almost all lower bounds for self-stabilization are for silent algorithms, which are required to stay in the same configuration once they have stabilized.
These lower bounds are then about a static data structure, the stabilized solution.
The question boils down to establishing how much memory is needed to locally certify the global correctness of the solution, and this is well studied \cite{Feuilloley21}.

When we do not require that the algorithm should converge to one correct configuration, and stay there, there is no static structure on which we can reason.
It is then unclear how we can establish lower bounds.
One way is to think about invariants.
Consider a property that we can assume to hold in the initial configuration, and that is preserved by the computation (if it follows some memory requirement hypothesis).
If no correct output configuration has this property, then we can never reach a correct output configuration.

In our proof, the property that will be preserved is that every configuration is symmetric.
This can clearly be assumed for the original configuration, and we show that basically if the memory is limited then this is preserved at each step.
% In our proof, the property that will be preserved is that every node has the same state. This can clearly be assumed for the original configuration, and we show that basically if the memory is limited then this is preserved at each step.
As the specification we use for leader election is that the leader should output 1, and the other nodes should output 0, then it is not possible that a proper output configuration is symmetric.

\paragraph*{Intuition on a toy problem}
Let us now give some intuition about why algorithms for identified networks with small memory are effective in $k$-homonym or anonymous networks as well.
The code of an algorithm $\mathcal{A}$ for identified networks may refer to the identifier of the node that is running it.
For example, a rule of the algorithm could be:
\begin{itemize}
  \item if the states of the current node and of its left and right neighbors are respectively $x$, $y$, and $z$, then: if the identifier is odd the new state is $a$, otherwise it is $b$.
\end{itemize}
Now suppose you have fixed an identifier, and you look at the rules for this fixed identifier.
In our example, if the identifier is 7, the rule becomes:
\begin{itemize}
  \item if the states of the current node and of its left and right neighbors are respectively $x$, $y$, and $z$, then: the new state is $a$.
\end{itemize}
This transformation can be done for any rule, thus, for an identifier $i$, we can get an algorithm~$\mathcal{A}_i$ specific to this identifier.
When we run $\mathcal{A}$ on every node, we can consider that every node, with some identifier $i$ is running $\mathcal{A}_i$.
Note that $\mathcal{A}_i$ does not refer in its code to the identifier.

The key observation is the following.
If the amount of memory an algorithm can use is very limited, then there is very limited number of different behaviors a node can have, especially if the code does not refer to the identifier.
Let us illustrate this point by studying an extreme example: a ring on which states have only one bit.
In this case the number of input configurations for a node is the set of views $(x,y,z)$ as above, with $x,y,z\in\{0,1\}$.
That is there are $2^3=8$ different inputs, thus the algorithm can be described with 8 different rules.
Since the output of the function is the new state, the output is also a single bit.
Therefore, there are at most $2^8=256$ different sets of rules,
that is 256 different possible behaviors for a node.
In other words, in this extreme case, each specific algorithm
$\mathcal{A}_i$ is equal to one of the behaviors of this list of 256 elements.
This implies that, if we take a ring with 257 nodes,
there exists two nodes with two distinct identifiers $i$ and $j$,
such that the specific algorithms $\mathcal{A}_i$ and $\mathcal{A}_j$ are equal.

This toy example is not strong enough for our purpose, as we want to argue about instances where the whole network is symmetric, not just two nodes, and where each node has a non-constant memory.
But the idea above can be strengthened to get our theorem.
The key is to use the hypothesis that the identifiers are taken from a large enough range.
As we have a pretty large palette of identifiers, we can always find, not only two,
but $n$ distinct identifiers in $[1,n\times c]$ that can be grouped such that the specific algorithm of all the nodes of the same group correspond to the exact same behavior.
In this case, it is as if the network was $k$-homonym, where $k$ is the size of each group.
If the identifiers are taken in $[1,n^c]$, we can even find $n$ distinct identifiers such that all the specific algorithms $\mathcal{A}_i$ correspond to the exact same behavior.
In this case it is as if the network was anonymous.

As soon as the network is homonym-like, we can start an execution from a symmetric configuration.
If the scheduler always activates all homonym nodes consecutively, then the execution contains an infinite number of symmetric configurations, and thus never stabilizes to a proper leader election execution.
This is presented in Figure~\ref{fig:LE-cycle}.

\begin{figure}[h]
  \begin{center}
        \includegraphics[scale=1]{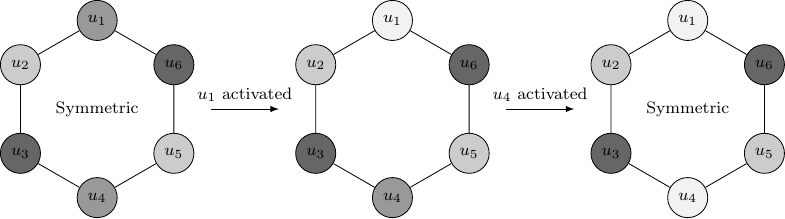}
    \caption{
      Example of two computing steps, with $n=6$, $k=2$, and
      $\A_{ID(u_1)} = \A_{ID(u_4)}$,
      $\A_{ID(u_2)} = \A_{ID(u_5)}$,
      $\A_{ID(u_3)} = \A_{ID(u_6)}$.
      Identical states are represented by identical shades of gray.
      If the states of the nodes are symmetric in the initial configuration, and if the behavior functions are symmetrically placed around the ring, then an execution can contain an infinite number of symmetric configurations.
      \label{fig:LE-cycle}
    }
  \end{center}
\end{figure}

Note that the larger the memory is, the more different behaviors there are, and the smaller
the set of identical specific algorithms we can find.
This trade-off implies that the construction works
as long as the memory is in $o(\log \log n)$.

%!TEX root = main-lower-bound.tex

\section{Proof of Theorem~\ref{thm:LE-cycle}}
\label{sec:LE-cycle}

In this section we prove our first result that establishes a lower bound on the
space complexity of self-stabilizing algorithms that solve leader election in
composite rings. \thmLEcycle*

% we consider any fixed $c\in \mathbb R, c>1$.
Consider a ring of size $n$, and an algorithm for identified
networks $\A$ using $f(n)$ bits of memory per node to solve leader election
in composite rings. An algorithm can be seen as the function that
describes the behavior of the algorithm. This function takes an identifier, a
state for the node, a state for its left neighbor and a state for its right
neighbor, and gives the new state of the node. Formally:
\[
  \begin{array}{cccccccccr}
    \mathcal A: &[n\times c] &\times &\{0,1\}^{f(n)} &\times &\{0,1\}^{f(n)} &\times &\{0,1\}^{f(n)} &\to &\{0,1\}^{f(n)}\\
    &(ID &, &\text{state} &, &\text{left-state} &, &\text{right-state}) &\mapsto &\text{new-state}
  \end{array}
\]

Note that in general, we consider non-directed rings thus the nodes do not have
a global consistent definition for right and left. As we are dealing with a
lower bound with a worst-case on the port numbering, assuming such a consistent
orientation only makes the result stronger. Now we can consider that for every
identifier $i$, we have an algorithm of the form:
\[
  \begin{array}{cccccccr}
    \mathcal{A}_i: &\{0,1\}^{f(n)} &\times &\{0,1\}^{f(n)} &\times &\{0,1\}^{f(n)} &\to &\{0,1\}^{f(n)}\\
    &(\text{state} &, &\text{left-state} &, &\text{right-state}) &\mapsto &\text{new-state}
  \end{array}
\]
Thus a specific algorithm $\mathcal{A}_i$ boils down to a function of the form:  $\{0,1\}^{3f(n)} \to\{0,1\}^{f(n)}$.
Let us call such a function a \emph{behavior}, and let $\mathcal B_n$ be the sets of all behaviors.

Lemma~\ref{lem:card} counts the maximum number of distinct behaviors that can exist.
\begin{lemma}
  \label{lem:card}
  $| \mathcal B_n | = 2^{ f(n) \times 2^{3f(n)} }$
\end{lemma}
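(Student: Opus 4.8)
The plan is to treat this as a routine counting of functions between finite sets. A behaviour is by definition a function of the form $\{0,1\}^{3f(n)} \to \{0,1\}^{f(n)}$, so $\mathcal{B}_n$ is exactly the set of all such functions, and the general fact I want to invoke is that the number of functions from a finite set $A$ to a finite set $B$ is $|B|^{|A|}$, since each of the $|A|$ elements of the domain can be mapped independently to any of the $|B|$ elements of the codomain.

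First I would identify the domain and codomain and compute their cardinalities. The domain is the set of possible inputs to a behaviour, namely a triple (state, left-state, right-state) each living in $\{0,1\}^{f(n)}$, which after concatenation is just $\{0,1\}^{3f(n)}$; hence its cardinality is $2^{3f(n)}$. The codomain is the set of possible new states, $\{0,1\}^{f(n)}$, of cardinality $2^{f(n)}$. Care must be taken only to keep straight which set plays the role of the base and which plays the role of the exponent: the codomain is the base and the domain is the exponent.

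Then I would simply substitute into $|B|^{|A|}$ to obtain
\[
  |\mathcal{B}_n| = \bigl(2^{f(n)}\bigr)^{2^{3f(n)}} = 2^{\,f(n)\,\cdot\,2^{3f(n)}},
\]
using the exponentiation identity $(2^{a})^{b} = 2^{ab}$ in the last step. This matches the claimed expression, completing the argument.

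There is no real obstacle here; the only thing to be mildly careful about is the bookkeeping of the two levels of exponentiation (the $2^{3f(n)}$ inputs, and for each the $2^{f(n)}$ possible outputs), so that the final exponent comes out as the product $f(n)\cdot 2^{3f(n)}$ rather than, say, $3f(n)\cdot 2^{f(n)}$. The statement is essentially a sanity-check lemma whose value is that it pins down the size of the palette of behaviours that the later trade-off (memory $o(\log\log n)$ versus polynomial ID range) will be balanced against.
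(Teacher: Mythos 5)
Your proof is correct and follows essentially the same route as the paper's: count the $2^{3f(n)}$ possible inputs and the $2^{f(n)}$ possible outputs, then apply the standard $|B|^{|A|}$ count of functions to get $\bigl(2^{f(n)}\bigr)^{2^{3f(n)}}=2^{f(n)\cdot 2^{3f(n)}}$. No differences worth noting.
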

\begin{proof}
  The inputs are basically binary strings of length  $3f(n)$, thus there are $2^{3f(n)}$ possibilities for them.
  Similarly the number of possible outputs is $2^{f(n)}$.
  Thus the number of functions in $|\mathcal{B}_n|$ is
  $\left(2^{f(n)}\right)^{2^{3f(n)}}
  =2^{ f(n) \times 2^{3f(n)} }$.
\end{proof}

Lemma~\ref{lem:card} implies that the smaller $f$, the fewer different behaviors. Let us make this more concrete with Lemma~\ref{lem:domination}.
\begin{lemma}
  \label{lem:domination}
  If $f(n) \in o(\log\log n)$, then for every $n$ large enough, for every $k\leq \sqrt n$, we have $\frac{n(c-1)}{k-1} > |\mathcal B_n|$.
%   If $f(n) \in o(\log\log n)$, then for every $n$ large enough, $\frac{n(c-1)}{k-1} > |\mathcal B_n|$.
\end{lemma}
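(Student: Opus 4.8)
The plan is to show that under the hypothesis $f(n) \in o(\log \log n)$, the quantity $|\mathcal B_n|$ grows strictly slower than the polynomial factor $n^{c-1}$, so that for all sufficiently large $n$ the inequality $n^{c-1} > |\mathcal B_n|$ holds. Since $c > 1$ is fixed, the right-hand side $n^{c-1}$ is a fixed positive power of $n$, which we may write as $2^{(c-1)\log n}$; the whole game is therefore to compare the exponent $f(n) \times 2^{3f(n)}$ coming from Lemma~\ref{lem:card} against the exponent $(c-1)\log n$, and to show the former is eventually dominated by the latter.

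First I would take logarithms (base $2$) of both sides, reducing the target inequality $|\mathcal B_n| < n^{c-1}$ to the equivalent statement
\[
  f(n) \times 2^{3 f(n)} < (c-1)\log n .
\]
Next I would exploit the assumption $f(n) \in o(\log\log n)$. This means that for any constant $\varepsilon > 0$, we have $f(n) < \varepsilon \log\log n$ for all large enough $n$; I would fix a convenient small constant, say $\varepsilon = 1/4$, so that eventually $3 f(n) < \tfrac{3}{4}\log\log n$, giving $2^{3 f(n)} < 2^{(3/4)\log\log n} = (\log n)^{3/4}$. Simultaneously the linear factor satisfies $f(n) < \tfrac14\log\log n \le \log n$ for large $n$. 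Multiplying these two bounds yields
\[
  f(n) \times 2^{3 f(n)} < (\log n)^{3/4} \cdot \log\log n ,
\]
and the right-hand side is $o(\log n)$ because $(\log n)^{3/4}\log\log n / \log n = \log\log n / (\log n)^{1/4} \to 0$. Hence the product is eventually smaller than $(c-1)\log n$, which is the desired inequality after exponentiating back.

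The only genuinely delicate point is getting the exponent $3f(n)$ inside the double exponential under control: a naive bound that keeps $f(n)$ as large as $\log\log n$ would give $2^{3f(n)} \approx (\log n)^3$, and then $f(n)\times 2^{3f(n)} \approx (\log n)^3 \log\log n$, which \emph{exceeds} $(c-1)\log n$. So the argument genuinely needs the strict little-$o$ (rather than big-$O$) hypothesis: one must push the constant in the exponent of $2^{3f(n)}$ strictly below $1$, which is exactly what $f(n) = o(\log\log n)$ provides but $f(n) = O(\log\log n)$ would not. I would make this explicit by noting that $o(\log\log n)$ lets us choose the multiplicative slack $\varepsilon$ as small as we like, guaranteeing the exponent of $\log n$ stays bounded away from $1$. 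Everything else is a routine asymptotic estimate, so I expect the main obstacle to be purely one of carefully tracking this threshold constant rather than any structural difficulty.
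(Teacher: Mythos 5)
Your proof is correct and takes essentially the same route as the paper's: both reduce the inequality to a comparison of exponents and use the little-$o$ hypothesis to conclude that $f(n)\cdot 2^{3f(n)} = o(\log n)$. The only cosmetic difference is that the paper applies the logarithm twice and compares $\log\log(n^{c-1}) \sim \log\log n$ against $\log\log|\mathcal B_n| \sim 3f(n)$, whereas you apply it once and make the $\varepsilon$-slack explicit; your closing remark that the strict little-$o$ (not big-$O$) is genuinely needed is accurate and matches the paper's reliance on $3f(n)\in o(\log\log n)$.
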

\begin{proof}
  Remark first that $\frac{n(c-1)}{k-1} \geq \sqrt n(c-1)$.
  Consider now the expression of $\sqrt n(c-1)$ and $|\mathcal B_n|$ after applying the logarithm twice:
%   Consider the expression of $\frac{n(c-1)}{k-1}$ and $|\mathcal B_n|$ after applying the logarithm twice:
  \[
    \begin{array}{rl}
      \log \log (\sqrt n(c-1)) &=\log(\frac12 \log n + \log(c-1)) \\
%       \log \log (\frac{n(c-1)}{k-1}) &=\log(\log n + \log(c-1) - \log (k-1)) \\
      &\sim \log\log n
    \end{array}
  \]
  \[
    \begin{array}{rl}
      \log \log (|\mathcal{B}_n|) &= \log \log \left(2^{ f(n) \times 2^{3f(n)} }\right)
      = \log \left( f(n) \times 2^{3f(n)} \right)
      =\log(f(n))+3f(n) \\
      &\sim 3f(n)
    \end{array}
  \]

  As the dominating term in the second expression is of order $f(n)\in o(\log\log n)$,
  asymptotically the first expression is larger.
  As $\log\log(\cdot)$ is an increasing positive function for large values,
  this implies that asymptotically $\frac{n(c-1)}{k-1} > |\mathcal B_n|$.
%   this implies that asymptotically $\frac{n(c-1)}{k-1} > |\mathcal B_n|$.
\end{proof}

Recall that our goal is to find $n$ different identifiers that can be grouped by
sets of size $k$ ($k$ being a divisor of $n$), such that in every group, all the
identifiers have the same corresponding behavior. If we consider $\varphi$ the
function that associates each identifier to the corresponding behavior, then it
boils down to finding $n$ distinct identifiers that can be grouped by $k$, such
that in each group, all the identifiers have the same image by $\varphi$. For
example, if $k=2$, $ID = \{1, 2, 3, 4, 5, 6, 7\}$, and $\mathcal B = \{b_1, b_2,
b_3\}$, we can have $\varphi(1) = \varphi(2) = \varphi(3) = b_1$, $\varphi(4) =
b_2$, and $\varphi(5) = \varphi(6) = \varphi(7) = b_3$. In that case, we can
form two sets of size $k=2$ such that all the elements of one set have the same
image by $\varphi$. We can for example consider $\{1,3\}$ and $\{6,7\}$.

Let us give a definition that formalizes that.
\begin{definition}
  Let $\varphi: A \to B$ be a function, and let $k\in \mathbb N$.
  We define the $k$-group number of $\varphi$, and denote by $t_k(\varphi)$ the maximum number of disjoint sets of size $k$ of elements of $A$, $S_1, S_2, \dots, S_{t_k(\varphi)}$ such that all the elements of the same set $S_i$ have the same image by $\varphi$.
\end{definition}

The precise value of $t_k(\varphi)$ depends on the specificities of $\varphi$.
Nevertheless, we can have a pretty good estimation of $t_k(\varphi)$ by comparing the respective sizes of $A$ and $B$.
Indeed, the larger $A$ is, the more chances we have to find sets of elements of $A$ that satisfy some property.
On the contrary, the larger $B$ is, the more possible images there are, and the harder it is to find elements of $A$ that are mapped to the same element of $B$.
Lemma~\ref{lem:two-collisions} establishes a generic lower bound on $t_k(\varphi)$.
\begin{lemma}
  \label{lem:two-collisions}
  Let $\varphi: A \to B$ be a function, and let $k\in \mathbb N$.
  We have $t_k(\varphi) \geq \frac{|A| - (k-1)|B|}{k}$.
\end{lemma}
\begin{proof}
  Let us first consider $b\in B$, and denote by $t_k^b(\varphi)$ the number of disjoint sets of size $k$ of elements of $A$ with image $b$ that can be formed.
  Intuitively, we have $t_k^b(\varphi) = \lfloor \frac{|\varphi^{-1}(b)|}k \rfloor$.
  By definition, we also have
  \[
    t_k(\varphi) = \sum_{b\in B} t_k^b(\varphi) = \sum_{b\in B} \lfloor \frac{|\varphi^{-1}(b)|}k \rfloor.
\]
  Recall the inequality, true for any integer $m$: $\lfloor \frac mk \rfloor \geq \frac{m-(k-1)}k$.
  Thus, we deduce:
  \[
    t_k(\varphi) \geq \sum_{b\in B} \frac1k(|\varphi^{-1}(b)| -(k-1)) \geq \frac1k \sum_{b\in B} (|\varphi^{-1}(b)| -(k-1)) \geq \frac1k \bigl(\sum_{b\in B} (|\varphi^{-1}(b)|) - (k-1)|B|\bigr) \\
  \]
  Since $\varphi$ is a function from $A$ to $B$ we have $\sum_{b\in B} (|\varphi^{-1}(b)|) = |A|$ and thus we conclude $t_k(\varphi) \geq \frac{|A| - (k-1)|B|}k$.
\end{proof}

The next lemma shows that if $\frac{n(c-1)}{k-1} > |\mathcal B_n|$, then for every $k\leq \sqrt n$
we can find $n/k$ disjoint sets of $k$ identifiers taken in $[1,n\times c]$ such that all the identifiers of each set have the same corresponding behavior.
% a large number of identifiers that have the same specific algorithm.

\begin{lemma}\label{lem:exists_indices}
  Let $\varphi_n:[1,n\times c] \to \mathcal B_n$ be the function that associates each identifier $i$ to its corresponding behavior $\mathcal A_i$, and let $k\leq \sqrt n$.
  If $\frac{n(c-1)}{k-1} > |\mathcal B_n|$, then $t_k(\varphi_n) \geq n/k$.
\end{lemma}
\begin{proof}
  According to Lemma~\ref{lem:two-collisions} we have $t_k(\varphi_n) \geq \frac1k (nc - (k-1)|\mathcal B_n|)$, and by hypothesis this means that $t_k(\varphi_n) \geq \frac1k (nc - n(c-1))$, and thus $t_k(\varphi_n) \geq n/k$.
\end{proof}

Combining the three lemmas we get that, if $f(n) \in o(\log\log n)$,
then for any large enough~$n$, for any $k\leq \sqrt n$, we can find $n$ different identifiers in $[1,n\times c]$
which can be grouped in sets of size $k$ such that all identifiers of the same set have the exact same behavior.
% then for any large enough~$n$ we can find $n$ different identifiers in $[1,n^c]$,
% such that the nodes have the exact same behavior.

Now, consider a large enough composite ring of size $n$, and let $k$ be a divisor of $n$ such that $1<k\leq \sqrt n$.
Let us consider $n$ identifiers that can be grouped into sets $S_0, S_1, \dots, S_{n/k - 1}$ of size $k$, as explained above.
Let us place these identifiers on the ring such that the $k$ identifiers of the same set are placed every $n/k$-th node.
As we are in a self-stabilizing scenario, we can choose from which configuration we start.
We actually do not need an intricate configuration: we start from a symmetric configuration, where all the nodes with identifier in the same set have the same state.
This cannot be a proper leader election output, because in leader election (with our specification) a leader and a non-leader have different output. Therefore, at least one node $v$ is enabled.
Let us denote by $S_v$ the set which contains $v$'s identifier.
Note that all the nodes whose identifier is in $S_v$ see the same states for themselves, and for both their neighbors.
Thus, since all of them also have the same behavior, if one is enabled, then the others are too, and if activated they will execute the same rule.
Now, the central scheduler activates once each one of the $k$ nodes whose identifier is in $S_v$, which takes $k$ steps.
% Now, the central scheduler activates one of them, and for the $k-1$ next steps, it activates each of the others.
The action of one node only changes its own state, and since these $k$ activated nodes are placed at a distance at least $2$ from each other, none of them will see the changes made by the others, so they all perform the same action when activated.
% Since they are placed at a distance at least $2$, the action executed by one node has no incidence on the state of the other nodes, nor on the states of their neighbors, and thus the other nodes take the exact same step than the first one.
Thus, after these $k$ computing steps, the ring is once again in a symmetric configuration.

We can iterate this argument forever, as long as the scheduler consecutively activates nodes from the same set, which a central scheduler can do.
Furthermore, these activations can be made strongly fair, by choosing as new activated set $S_i$ the set of enabled nodes that have not been activated for the longest time, for example.
In other words, there exists an execution of $\A$ under the central strongly fair scheduler such that the system is infinitely often in symmetric configurations.
Therefore, the network never stabilizes in an execution that satisfies leader election.
This proves Theorem~\ref{thm:LE-cycle}.

% In other words, the network cannot escape the set of configurations where each tuple of opposite nodes have the same state (as long as the scheduler activates opposite nodes $k$ by $k$, which is allowed). Therefore, the network will never reach a proper leader election configuration, and this proves Theorem~\ref{thm:LE-cycle}.

% Now, consider a large enough graph, with identifiers taken from the set $S$. As we are in a self-stabilizing scenario, we can chose from which configuration we start. 
% We actually do not need an intricate configuration: we start from a configuration where all states are the same, say some string $x$.
% This cannot be a proper leader election output, because in leader election (with our specification) a leader and a non-leader have different output. Therefore, at least one node is enabled.
% Note that every node sees the same states $x$ for itself, and its two neighbors, and that by construction they have the same behavior. 
% Therefore if one node is activated, all nodes are activated. 
% Now, the scheduler decides to activate all the nodes. 
% Necessarily, all the nodes take the exact same step, and end up in a configuration where every node has the same state.
% 
% We can iterate this argument forever.
% In other words, the network cannot escape the set of configurations where all nodes have the same state (as long as the scheduler activates all the nodes together, which is allowed). Therefore, the network will never reach a proper leader election configuration, and this proves Theorem~\ref{thm:LE-cycle}.

%!TEX root = main-lower-bound.tex

\section{Proof of Theorem~\ref{thm:general-thm-homo}}
\label{sec:general-homo}

In this section, we prove our general result that relates the power of algorithms with small memory on identified networks to the power of algorithms on $k$-homonym networks.
\thmGeneralHomo*
The proof of Theorem~\ref{thm:general-thm-homo} follows the same idea as the one of Theorem~\ref{thm:LE-cycle}, but in a more general way.
We generalize in two directions: first we formally prove an equivalence with $k$-homonym networks, and second we consider a much larger class of graphs than composite rings.
Consider a graph $G_i$ of degree $\Delta$, and an algorithm $\mathcal{A}$ using $f(n) \in o(\frac{\log\log n}{\Delta})$ bits of memory per node to solve an arbitrary problem on identified networks.

As before an algorithm can be seen as a function that maps the view of the node to an output. But now as the degree is not the same for every node, we have one function for every degree.
This not very convenient for us, so we take another point of view.
We consider a function that takes as input: the node's identifier, its state, $\Delta$ other states, and a number $\delta$.

\[
  \begin{array}{cccccccccr}
    \mathcal A: &[n\times c] &\times &\{0,1\}^{f(n)} &\times & \left(\{0,1\}^{f(n)}\right)^{\Delta} & \times & [\Delta] &\to &\{0,1\}^{f(n)}\\
       &(ID &, &\text{state} &, &\text{$\Delta$ states} &, &\text{degree}) &\mapsto &\text{new-state}
  \end{array}
\]

The idea is that, to get its output, node $v$ will first feed its own degree $\Delta_v$ as the last input of the function, and then feed its identifier $ID(v)$, its own state $S(v)$, and the states of its $\Delta_v$ first neighbors as the $\Delta_v+2$ first fields. The other fields are left blank. The output of the function is the output of the algorithm.

Note that this corresponds to the single-port setting: in the function, each neighboring state is identified with the local port number associated with that neighbor. In particular, it is not a set of neighboring states. But it is not a double-port setting: which port a neighbor assigns to a node is unknown.

Now if we fix the identifier $i$, we get:
\[
  \begin{array}{cccccccccr}
    \mathcal A_i: &\{0,1\}^{f(n)} &\times & \left(\{0,1\}^{f(n)}\right)^{\Delta} & \times & [\Delta] &\to &\{0,1\}^{f(n)}\\
       &(\text{state} &, &\text{$\Delta$ states} &, &\text{degree}) &\mapsto &\text{new-state}
  \end{array}
\]

Now the equivalent of Lemma~\ref{lem:card} is that the number of such behaviors~$\mathcal{A}_i$ is:
\[ 
|\mathcal{B}_n| \leq \left(
2^{f(n)}
\right)^{2^{(\Delta+1)f(n)}\Delta}
\]

We bound the double logarithm of this expression:
\begin{align*}
  \log \log \left[ \left( 2^{f(n)} \right) ^ {2^{(\Delta+1)f(n)}\Delta} \right]
  &= \log \left[ f(n) 2^{(\Delta+1)f(n)}\Delta \right] \\
  &= (\Delta+1)f(n) + \log f(n) + \log \Delta\\
%   & \leq (\Delta+1)f(n) + \log f(n) + \log \delta(n) \\
  & \in o(\log \log n)
\end{align*}

Thus, since $k$ is now a constant, we can compare $\log\log( \frac{n(c-1)}{k-1} )$ to $\log \log |\mathcal{B}_n|$ to conclude that, for any large enough $n$, $\frac{n(c-1)}{k-1} > |\mathcal B_n|$.
% On the other hand, we can generalize Lemma~\ref{lem:two-collisions}.
% Given $\varphi: A \to B$, let $t_k(\varphi)$ be the number of $k$-tuples of, all distinct, elements of $A$ that can be formed such that all the elements of the same $k$-tuple have the same image by $\varphi$.
% By similar arguments than those of the proof of Lemma~\ref{lem:two-collisions} we have $t_k(\varphi) \geq \frac{|A| - (k-1)|B|}k$.
If we consider $\varphi_n: [1,n\times c] \to \mathcal B_n$ we deduce similarly to Lemma~\ref{lem:exists_indices} that for any large enough $n$, $t_k(\varphi_n) \geq n/k$.

Let us now consider a large enough graph $G$ with size $n$ a multiple of $k$ and with degree at most $\delta(n) \in o(\log n)$.
Let us consider $S_1, S_2, \dots, S_{n/k}$, disjoint sets of $k$ identifiers, such that all the identifiers of the same set have the same image by $\varphi_n$.
For each set $S_i$, let us pick one specific identifier, $ID_i$.
Now consider $G_h$ a $k$-homonym network with topology $G$ and $k$ occurrences of each $ID_i$,
and $G_{ID}$ an identified network similar to $G_h$, where for each $i$, occurrences of identifier $ID_i$ are substituted by one of each value of $S_i$.
By construction, any execution of $\A$ on $G_h$ is also an execution of $\A$ on $G_{ID}$, and since by hypothesis, $\A$ solves problem $P$ under scheduler $\mathcal D$ on $G_{ID}$, it also solves $P$ under scheduler $\mathcal D$ on $G_h$.
% Since the algorithm does not distinguish identifiers coming from the same set, it behaves on that ne
% In other words, there exists an identifier assignment such that the network is $k$-homonym.
This completes the proof of Theorem~\ref{thm:general-thm-homo}.

\section{Proof of Theorem~\ref{thm:general-thm-ano}}
\label{sec:general-ano}

In this section, we prove our general result that relates the power of algorithms with small memory on identified networks, to the power of algorithms on anonymous networks.
\thmGeneralAno*
The proof of Theorem~\ref{thm:general-thm-ano} is basically the same as the proof of Theorem~\ref{thm:general-thm-homo}.
The difference is that if we suppose that the range of the identifiers is larger, then it is easier to find different identifiers that are mapped to the same behavior.
If we suppose that the range is polynomial in $n$, then for any large enough $n$, we can find $n$ different identifiers that all correspond to the same behavior.
Consider a graph $G$ of degree $\Delta$, and an algorithm for identified networks $\mathcal{A}$ using $f(n) \in o(\frac{\log\log n}{\Delta})$ bits of memory per node to solve an arbitrary problem.

The prototype of the function that describes the algorithm is now:
\[
  \begin{array}{cccccccccr}
    \mathcal A: &[n^c] &\times &\{0,1\}^{f(n)} &\times & \left(\{0,1\}^{f(n)}\right)^{\Delta} & \times & [\Delta] &\to &\{0,1\}^{f(n)}\\
       &(ID &, &\text{state} &, &\text{$\Delta$ states} &, &\text{degree}) &\mapsto &\text{new-state}
  \end{array}
\]

If we fix the identifier $i$, we get:
\[
  \begin{array}{cccccccccr}
    \mathcal A_i: &\{0,1\}^{f(n)} &\times & \left(\{0,1\}^{f(n)}\right)^{\Delta} & \times & [\Delta] &\to &\{0,1\}^{f(n)}\\
       &(\text{state} &, &\text{$\Delta$ states} &, &\text{degree}) &\mapsto &\text{new-state}
  \end{array}
\]

We still have the following inequalities:
\[
|\mathcal{B}_n| \leq \left(
2^{f(n)}
\right)^{2^{(\Delta+1)f(n)}\Delta}
\]

and
\[
\log \log |\mathcal B_n| \in  o(\log \log n)
\]

By comparing $\log \log (n^{c-1})$ and $\log\log |\mathcal B_n|$, we can prove that, for any large enough $n$, $n^{c-1} > |\mathcal{B}_n|$.
Finally, let us consider the function $\varphi_n: [1,n^c] \to \mathcal B_n$.
% Then the generalization of Lemma~\ref{lem:two-collisions} established in Section~\ref{sec:general-homo} states that 
We deduce from Lemma~\ref{lem:two-collisions} that for any large enough $n$, $t_n(\varphi_n) \geq \frac{n^c - (n-1)n^{c-1}}{n} \geq n^{c-2} > 0$, and since $t_n(\varphi_n)$ is an integer, we have $t_n(\varphi_n) \geq 1$.
In other words, there exist $n$ distinct identifiers that are all mapped to the same behavior.

Let us now consider a large enough graph $G$ with degree at most $\delta(n) \in o(\log n)$.
Let us consider $n$ identifiers $ID_1, \cdots ID_n$ that all have the same image by $\varphi_n$.
Now consider $G_a$ an anonymous network with topology $G$ and where each node has identifier $ID_1$,
and $G_{ID}$ an identified network similar to $G_a$, where the identifiers of the nodes are $ID_1, \cdots ID_n$.
By construction, any execution of $\A$ on $G_a$ is also an execution of $\A$ on $G_{ID}$, and since by hypothesis, $\A$ solves problem $P$ under scheduler $\mathcal D$ on $G_{ID}$, it also solves $P$ under scheduler $\mathcal D$ on $G_a$.
% Since the algorithm does not distinguish identifiers coming from the same set, it behaves on that ne
% In other words, there exists an identifier assignment such that the network is $k$-homonym.
This completes the proof of Theorem~\ref{thm:general-thm-ano}.

% We get that for any large enough $n$, $n^{c-1}> |\mathcal{B}_n|$, and by Lemma~\ref{lem:exists_indices}, we directly get that there exists an identifier assignment such that all nodes have the same behavior.
% In other words, on this identifier assignment, the algorithm behaves as though it is anonymous. As a consequence any impossibility result for anonymous algorithms also holds here. This completes the proof of Theorem~\ref{thm:general-thm-ano}.

\section{Conclusion}

In this paper, we have established a lower bound $\Omega(\log\log n)$ bits per node on the size of the registers for self-stabilizing algorithms solving leader election in the state model.
This bound matches the upper bound $O(\log\log n)$ bits per node for bounded-degree graphs~\cite{BlinT18-b}.

Yet, for arbitrary graphs,~\cite{BlinT18-b} requires an additional space in $O(\log \Delta)$ bits per node.
An interesting problem would be to find whether this additional term in $O(\log \Delta)$ is necessary for graphs where the degree is not bounded.

% The same upper bound on the size of the registers is known to hold in bounded-degree graphs for vertex coloring and spanning tree construction~\cite{BlinT18-b}.
% An interesting open problem is to determine the space complexity of vertex coloring and spanning tree construction.

%%%%%%%%%%%%%%%%%%%%%% biblio %%%%%%%%%%%%%%%%%%%
% \def\UrlFont{\sf\footnotesize}
\DeclareUrlCommand{\Doi}{\urlstyle{same}}
\renewcommand{\doi}[1]{\href{https://doi.org/#1}{\footnotesize\sf doi:\Doi{#1}}}

\bibliographystyle{plainnat}

\bibliography{biblio-lower-bound.bib}{}
\end{document}